\newtheorem{theorem}{Theorem}
\newtheorem{lemma}{Lemma}
\def\BibTeX{{\rm B\kern-.05em{\sc i\kern-.025em b}\kern-.08em
    T\kern-.1667em\lower.7ex\hbox{E}\kern-.125emX}}
\begin{document}
\title{Application of Terminal Region Enlargement Approach for Discrete Time Quasi Infinite Horizon NMPC} 
\author{Chinmay Rajhans, \IEEEmembership{Member, IEEE}, and Sowmya Gupta, \IEEEmembership{Member, IEEE}
\thanks{``This work is not supported by any funding agency.'' }
\thanks{C. Rajhans is with Department of Electrical Engineering, IIT Bombay, Mumbai, India (e-mail: rajhanschinmay2@gmail.com). }
\thanks{S. Gupta is with Center for Research in Nano Technology and Science, IIT Bombay, Mumbai, India (e-mail: sowmya.gupta283@gmail.com). }
}
%\thanks{This paragraph of the first footnote will contain the date on 
%which you submitted your paper for review. It will also contain support 
%information, including sponsor and financial support acknowledgment. For 
%example, ``This work was supported in part by the U.S. Department of 
%Commerce under Grant BS123456.'' }
%\thanks{The next few paragraphs should contain 
%the authors' current affiliations, including current address and e-mail. For 
%example, F. A. Author is with the National Institute of Standards and 
%Technology, Boulder, CO 80305 USA (e-mail: author@boulder.nist.gov). }
%\thanks{S. B. Author, Jr., was with Rice University, Houston, TX 77005 USA. He is 
%now with the Department of Physics, Colorado State University, Fort Collins, 
%CO 80523 USA (e-mail: author@lamar.colostate.edu).}
%\thanks{T. C. Author is with 
%the Electrical Engineering Department, University of Colorado, Boulder, CO 
%80309 USA, on leave from the National Research Institute for Metals, 
%Tsukuba, Japan (e-mail: author@nrim.go.jp).}}

\maketitle

\begin{abstract} 
Ensuring nominal asymptotic stability of the Nonlinear Model Predictive Control (NMPC) controller is not trivial. Stabilizing ingredients such as terminal penalty term and terminal region are crucial in establishing the asymptotic stability. Approaches available in the literature provide limited degrees of freedom for the characterization of the terminal region for the discrete time Quasi Infinite Horizon NMPC (QIH-NMPC) formulation. Current work presents alternate approaches namely arbitrary controller based approach and LQR based approach, which provide large degrees of freedom for enlarging the terminal region. Both the approaches are scalable to system of any state and input dimension. Approach from the literature provides a scalar whereas proposed approaches provide a linear controller and two additive matrices as tuning parameters for shaping of the terminal region. Proposed approaches involve solving modified Lyapunov equations to compute terminal penalty term, followed by explicit characterization of the terminal region. Efficacy of the proposed approaches is demonstrated using benchmark two state system. Terminal region obtained using the arbitrary controller based approach and LQR based approach are approximately 10.4723 and 9.5055 times larger by area measure when compared to the largest terminal region obtained using the approach from the literature. 
\end{abstract}

\begin{IEEEkeywords}
Asymptotic stability, Control systems, Lyapunov methods, Optimal control, Predictive Control.  
%Enter key words or phrases in alphabetical 
%order, separated by commas. For a list of suggested keywords, send a blank 
%e-mail to keywords@ieee.org or visit \underline
%{http://www.ieee.org/organizations/pubs/ani\_prod/\discretionary{}{}{}keywrd98.txt}
\end{IEEEkeywords}

\section{Introduction}
\label{sec:introduction}

\IEEEPARstart{M}{odel} Predictive Control (MPC) finds its applications in every field of engineering \cite{Qin2003, Camacho2007}. Principle concept for ensuring nominal and robust stability involves inclusion of stabilizing constraints \cite{Rawlings1993}. A significant process has taken place in the area of nominal and robust stability of linear MPC and Nonlinear MPC (NMPC) \cite{Mayne2000, Mayne2014, Rawlings2017}.  Terminal equality constraint is mathematically equivalent to equating the terminal state to the target set point or target steady state operating point \cite{Keerthi1988}. Primary limitation of terminal equality constraint is that it is very conservative and often leads to infeasibility specifically for constrained formulations. Michalska and Mayne conceptualized dual mode MPC scheme wherein the novel idea of terminal region was introduced \cite{Michalska1993}. NMPC controller is expected to drive the plant trajectory into a region, termed as terminal region or terminal set, around the origin in the finite number of steps using the feasible inputs. Subsequently, local linear controller will take the plant trajectory to the origin. This idea was extended by Chen and Allg\"ower with the concept of Quasi Infinite Horizon - Nonlinear Model Predictive Control (QIH-NMPC) formulation \cite{Chen1998}. 

One seminar contribution of \cite{Chen1998} was that it gave explicit approaches for the characterization of the terminal ingredients for the continuous time QIH-NMPC formulations. Several researchers have developed approaches for the characterization of the terminal constraints for the continuous time QIH-NMPC formulations \cite{Chen1998, Chen2003, Lucia2015, Rajhans2016} and for the discrete time QIH-NMPC formulations \cite{Limon2002, Johansen2004, Rajhans2017}. It may be noted that discrete time formulations require separate considerations due to the concept of sampling time vastly affecting the terminal region shape and size \cite{Astrom1997, Grune2011}. One of the major limitation of these approaches available in the literature is that all of them provide a single scalar tuning parameter giving only one degree of freedom for enlarging the terminal region. Recently, \cite{Yu2017} presented one approach for terminal region characterization, which is discrete time equivalent of the continuous time approach by Chen and Allg\"ower\cite{Chen1998}. However, in addition to the scalar tuning parameter limitation, Yu et al.'s approach also was suitable when linearized discrete time system at the operating point is controllable. Due to such limitation, applicability of the approach was reduced and also resulted in conservative terminal regions. 

Current work aims at extending the concepts presented by Rajhans et al. in \cite{Rajhans2017} and providing larger degrees of freedom for the terminal region characterization of the discrete time QIH-NMPC formulations.  For the proposed approaches in the current work, two tuning matrices are provided which further increase the degrees of freedom available to the designer. Proposed approaches provide three degrees of freedom namely a) linear stabilizing controller, b) additive state weighting matrix, and c) additive input weighting matrix. 

%Additionally, nominal asymptotic stability result of the continuous time QIH-NMPC formulation is presented after incorporating the updated terminal ingredients obtained after solving the modified Lyapunov equation. 

Efficacy of the proposed arbitrary controller based approach and LQR based approach is demonstrated using the benchmark two state system. This system is used by various researchers for demonstrating their stability results \cite{Mayne1990, Chen1998, Limon2005, Yu2017}.

Second section presents the discrete time QIH-NMPC formulation in detail. In addition, approach by Yu et al. \cite{Yu2017} is stated mathematically along with its limitation. Third section presents two alternate approaches for the computation of the terminal penalty and for the characterization of the terminal region.  Forth section presents numerical characterization of the terminal region using the approaches presented in the third section. Fifth section presents the simulation case study results in detail. Sixth section presents the conclusions derived.

\subsection{Notation} 
Two norm of a vector $\mathbf{y}$ is defined as $|\mathbf{y}|:= \sqrt{\mathbf{y}^T \mathbf{y}}$. Weighted norm having weighting matrix ${\mathbf{A}}$ is defined as $\mathbf{y_A} := \sqrt{\mathbf{y^T A y}}$. $norm(\mathbf{A})$ represents a two norm of matrix $\mathbf{A}$. $\lambda_{min} (\mathbf{A})$ indicates the smallest eigenvalue of $\mathbf{A}$. $\rho_{max} (\mathbf{A})$ indicates the maximum magnitude eigenvalue of $\mathbf{A}$. $\mathbb{R}$ is a set of real numbers. $\mathbb{N}$ is a set of natural numbers.  
% $\pmb{A}$ ${A}$ $\mathbold{A}$ $\bf{A}$

Consider an autonomous discrete time system 
\begin{eqnarray}
\mathbf{x}(k+1) = \tilde{\mathbf{F}} ({\mathbf{x}}(k)) \label{dsystem0}
\end{eqnarray}
with initial condition ${\bf{x}}(0) = {\mathbf{x}}_0$. Consider a set $\mathcal{X} \subset \mathbb{R}^{n_x}$ such that $\mathbf{0} \in \mathcal{X}$. 

%Definition of asymptotically stable equilibrium point is given below: 
%\begin{definition}
%The equilibrium point $\mathbf{0}$ of an autonomous discrete time system (\ref{dsystem0}) is asymptotically stable if $lim_{k \to \infty} {\mathbf{x}}(k) = {\mathbf{0}}$ for all ${\mathbf{x}}_0 \in \mathcal{X}$. 
%\end{definition}

\section{Discrete Time QIH-NMPC Formulation} 
%Consider a discrete time nonlinear system model given as  
%\begin{eqnarray}
%\mathbf{X}(k+1) = {\mathbf{F}_d} ({\mathbf{X}}(k),{\mathbf{U}}(k)) \label{dsystem1} 
%\end{eqnarray}
%where $\mathbf{X}(k) \in \mathbb{R}^{n_x}$ denotes the state
%vector in absolute terms and $\mathbf{U}(k) \in \mathbb{R}^{n_u}$ denotes the input vector in absolute terms. Let $(\mathbf{X}_s, \mathbf{U}_s)$ be the constant (time invariant) steady state of the discrete time system (\ref{dsystem1}) i.e. $\mathbf{X}_s = {\mathbf{F}_d} (\mathbf{X}_s, \mathbf{U}_s)$. Defining shift of origin as follows: 
%\begin{eqnarray}
%\mathbf{x}(k) = \mathbf{X}(k) - \mathbf{X}_s \\ 
%\mathbf{u}(k) = \mathbf{U}(k) - \mathbf{U}_s
%\end{eqnarray}
%After shift of origin, consider the discrete time nonlinear system given as  
%\begin{eqnarray}
%\mathbf{X}(k+1)-\mathbf{X}_s = {\mathbf{F}_d} ({\mathbf{X}(k)-\mathbf{X}_s},{\mathbf{U}(k)-\mathbf{U}_s}) \label{dsystem2} 
%\end{eqnarray}
%Re-wiring with a simpler notation results in  

Consider a discrete time nonlinear system model given as  
\begin{align}
\mathbf{x}(k+1) &= {\mathbf{F}} ({\mathbf{x}}(k),{\mathbf{u}}(k)) \label{dsystem} \\
{\bf{x}}(0) &= {\bf{x}}_0
\end{align}
where $\mathbf{x}(k) \in \mathcal{X} \subset \mathbb{R}^{n_x}$ denotes the state
vector and $\mathbf{u}(k) \in \mathcal{U} \subset \mathbb{R}^{n_u}$ denotes the input vector. Note that origin $(\mathbf{x}(k) = \mathbf{0}, \mathbf{u}(k) = \mathbf{0})$ is the equilibrium point of the system (\ref{dsystem}) due to shift of the origin.

Following assumptions are made: 
\begin{description}
\item[D1] System dynamics function $\mathbf{F}: \mathbb{R}^{n_x}\times \mathbb{R}^{n_u} \to \mathbb{%
R}^{n_x}$ is twice continuously differentiable. 
\item[D2] The origin $\mathbf{0} \in \mathbb{R}^{n_x}$ is an equilibrium point of the system (\ref{dsystem}) i.e. $\mathbf{F}\left( \mathbf{0}, \mathbf{0} \right) =\mathbf{0}$.
\item[D3] The inputs $\mathbf{u}(k)$ are constrained inside 
a closed and convex set $\mathcal{U} \subset \mathbb{R}^{n_u}$.
\item[D4] The system (\ref{dsystem}) has a unique solution for any initial
condition $\mathbf{x}_{0} \in \mathcal{X}$ and any input $\mathbf{u}(k) \in \mathcal{U}$. 
\item[D5] The state $\mathbf{x}(k)$ is perfectly known at any sampling instant $k$ i.e. all the states are measured. 
\item[D6] External disturbances do not affect the system dynamics. 
\end{description}

\subsection{QIH-NMPC Formulation}
For the discrete time system given by (\ref{dsystem}), QIH-NMPC formulation is stated as follows: 

\begin{equation}
\begin{array}{c}
\min \\ 
\overline{\mathbf{u}}_{(k, k+N-1)}%
\end{array}%
J\left( \mathbf{x}(k),\overline{\mathbf{u}}_{[k, k+N-1]}\right)  \label{DOptimal}
\end{equation}%
with 
\begin{eqnarray}
J\left( \mathbf{x}(k),\overline{\mathbf{u}}_{[k, k+N-1]}\right)
&=&\sum_{i=k}^{k+N-1}\left\{ 
\begin{array}{c}
\mathbf{z}(i)^T \mathbf{W}_{x} \mathbf{z}(i) \\+ \overline{\mathbf{u}}(i)^T \mathbf{W}_{u} \overline{\mathbf{u}}(i) 
\end{array}%
\right\} \notag \\
&&+\mathbf{z}(k+N)^T \mathbf{P} \mathbf{z}(k+N) \label{StageCost} 
\end{eqnarray}%
\begin{equation}
\overline{\mathbf{u}}_{[k, k+N-1]}=\left\{ \bar{\mathbf{u}}(i)\in \mathcal{U}%
: i \in \mathbb{N}, k \leq i \leq k+N-1 \right\} \label{InputSet}
\end{equation}%
subject to 
\begin{eqnarray} 
\mathbf{z}(i+1)&=&\mathbf{F}\left( \mathbf{z}(i), %
\overline{\mathbf{u}}(i)\right) \text{ for } k \leq i \leq k+N-1 \label{PredictedState} \\ 
\mathbf{z}(k)&=&\mathbf{x}(k) \label{InitialCondition} \\ 
\mathbf{z}(i) &\in& \mathcal{X} \text{ for } k \leq i \leq k+N \label{zconstraint} \\
\bar{\mathbf{u}}(i) &\in& \mathcal{U} \text{ for } k \leq i \leq k+N-1 \label{uconstraint} \\ 
\mathbf{z}\left(k+N \right) &\in& \Omega  \label{TerminalRegion}
\end{eqnarray}%
where $\mathbf{W}_{x}$ and $\mathbf{W}_{u}$ are state and input weighting matrices of dimension $\left( n_x \times n_x \right) $, $\left( n_u \times n_u \right) $ respectively. $\mathbf{P}$ is the terminal penalty matrix of dimension $\left( n_x \times n_x \right) $. $\mathbf{W}_{x}, \mathbf{W}_{u}, \mathbf{P}$ are symmetric positive definite matrices. $N$ is a finite prediction horizon length. 
%Here, $\left\Vert \mathbf{v}\right\Vert _{\mathbf{A}}^{2}=\mathbf{v}^{T}\mathbf{Av}$. 
$\mathbf{z}(i)$ denotes the predicted state in the QIH-NMPC formulation and $%
\overline{\mathbf{u}}(i)$ denotes the future input moves. The
set $\Omega$ is termed as the \emph{terminal region} in the neighborhood of
the origin. 
% and is chosen such that it is invariant for the nonlinear control system controlled by a \emph{fictitious} local linear state feedback controller with gain matrix say, $\mathbf{K}$. 
The set $\mathcal{X}_N \subset \mathcal{X} \subset \mathbb{R}^{n_x}$ is termed as the \emph{region of attraction} is the set of all feasible initial conditions i.e. it is a set of all initial conditions $\mathbf{x}_0$ such that terminal inequality constraint (\ref{TerminalRegion}) is satisfied with inputs constrained given by equation (\ref{InputSet}) satisfied.

\subsection{Design and Implementation of QIH-NMPC Formulation}
The terminal region $\Omega$ is chosen as an invariant set for the nonlinear system (\ref{dsystem}) controlled by local linear controller with gain matrix $\mathbf{K}$. The terminal penalty term is chosen such that for all trajectories starting from any point inside the terminal region $\Omega$, with approximation that a single cost term having larger value that the sum of all the predicted stage cost terms from end of horizon to infinity and is given as follows:  
\begin{equation}
\mathbf{z}(k+N)^T \mathbf{P} \mathbf{z}(k+N) \geq \sum_{i = k+N}^{\infty }\left\{ 
\begin{array}{c}
\mathbf{z}(i)^T \mathbf{W}_{x} \mathbf{z}(i)
\\ 
+ \overline{\mathbf{u}}(i) \mathbf{W}_{u} \overline{\mathbf{u}}(i)
\end{array} \right\} \label{TRCondition}
\end{equation}
with $\overline{\mathbf{u}}(i) = -\mathbf{K} \mathbf{z}(i) \in \mathcal{U}$ for all $k \geq k+N$ and for all $\mathbf{z}(k+N) \in \Omega$. 

It is assumed that the solution to the optimal problem (\ref{DOptimal}) with stage cost defined by (\ref{StageCost}) with input set given by (\ref{InputSet}) subject to the predicted state dynamics (\ref{PredictedState}) with initial condition (\ref{InitialCondition}), predicted states constraint (\ref{zconstraint}), future input moves constraint (\ref{uconstraint}) and terminal state constraint (\ref{TerminalRegion}) i.e. $\overline{\mathbf{u}}_{[k, k+N-1]}^*$ exists and can be computed numerically. Controller is implemented as a moving horizon framework. Accordingly, only the first control move $\mathbf{u}(k) = \overline{\mathbf{u}}^*(k)$ is implemented in the plant. 
%\begin{eqnarray}
%\mathbf{u}(k) = \overline{\mathbf{u}}^*(k) \label{InputMove}
%\end{eqnarray}
Entire process is repeated at the next sampling instant $k+1$. The term \emph{Quasi Infinite} is because of the fact that the NMPC formulation depicts the stability properties of the infinite horizon formulation, however, the actual implementation contains finite horizon. Such implementation is achieved with the help of the equation (\ref{TRCondition}). However, only ensuring the terminal penalty term satisfying the condition (\ref{TRCondition}) is not sufficient to guarantee the nominal asymptotic stability of the NMPC controller, hence terminal constraint as given by (\ref{TerminalRegion}) becomes inevitable. It may be noted that local linear controller with gain matrix $\mathbf{K}$ is not used for implementation of the NMPC controller and is only a mathematical construct to characterize the terminal region $\Omega$. 

\subsection{Approach from the Literature} 
Before proceeding to application of the proposed arbitrary controller based approach and LQR based approach, a look at Yu et al's approach from \cite{Yu2017} is necessary. Consider, Jacobian linearization of the nonlinear system (\ref{dsystem}) in the neighborhood the origin as,  
\begin{equation}
\mathbf{x}(k+1)=\mathbf{\Phi x}(k)+\mathbf{\Gamma u}(k)  \label{DLinSys}
\end{equation}%
where 
\begin{equation*}
\mathbf{\Phi} = \left[ \frac{\partial \mathbf{F}}{\partial \mathbf{x}}\right]
_{\left( \mathbf{0},\mathbf{0}\right) }\text{ and \ }%
\mathbf{\Gamma} = \left[ \frac{\partial \mathbf{F}}{\partial \mathbf{u}}\right]
_{\left( \mathbf{0},\mathbf{0}\right) }
\end{equation*}

One additional assumption is required at this stage. 
\begin{description}
\item[D7] The linearized system (\ref{DLinSys}) is stabilizable. 
\end{description}
Yu et al. characterize the terminal region as, 
\begin{equation}
\Omega \equiv \left\{ \mathbf{x}(k) \in \mathbb{R}^{n_x} | \mathbf{x}(k)^{T}%
\mathbf{P} \mathbf{x}(k) \leq \alpha, -\mathbf{Lx}(k) \in \mathcal{U}%
\right\}
\end{equation}%
where linear gain $\mathbf{L}$ and the terminal penalty matrix $\mathbf{P}$ are the steady state solutions of the modified Lyapunov equation given as follows:    
\begin{equation}
\kappa^2 \mathbf{\Phi}_{L}^T \mathbf{P} \mathbf{\Phi}_L - \mathbf{P} =-\mathbf{Q%
}^*  \label{YuLyapunov}
\end{equation}%
\begin{equation}
\mathbf{Q}^* = \mathbf{W}_{x} + \mathbf{L}^T \mathbf{W}_{u} \mathbf{L}  \label{Qstar}
\end{equation}
where $\mathbf{\Phi}_{L} = \mathbf{\Phi - \Gamma L}$ and parameter $\kappa$ is chosen such that $1 < \kappa < 1/ \left[ \rho_{max} \left( \mathbf{\Phi}_L \right) \right]$ with $\rho_{max}(\mathbf{\Phi}_L)$ being the maximum amplitude eigenvalue of $\mathbf{\Phi}_L$. Since the gain $\mathbf{L}$ is stabilizing, all the eigenvalues of $\mathbf{\Phi}_L$ lie inside the unit circle, hence $\rho_{max}(\mathbf{\Phi}_L) < 1$. 
It can be noted that once stage cost weighting matrices $\mathbf{W}_{x}, \mathbf{W}_{u}$ are chosen, there is barely any degree of freedom left to the designer for shaping of the terminal region. This results in  very conservative terminal regions. In addition, for the systems where there is uncontrollable eigenvalue of $\mathbf{\Phi}$ is close to unit circle, value of $\kappa$ becomes nearly $1$, resulting in a negligible margin for shaping of the terminal region. 
These limitations are overcome by using the arbitrary controller based approach wherein additive tuning matrices are introduced for providing large degrees of freedom for enlarging of the terminal region and is presented in the subsequent section.

\section{Alternate Approaches for Terminal Region Characterization} 
Two approaches are proposed in this section. First is an arbitrary controller based approach as proposed by Rajhans et al. in \cite{Rajhans2017}, and second is an LQR based approach.  
In the arbitrary controller based approach, an arbitrary stabilizing linear is designed using any of the methods available in the literature such as pole placement \cite{Kailath1980}, Linear Quadratic Gaussian (LQG) control \cite{Kirk1970} and so on. In the LQR based approach, an LQG controller is designed using the weighting matrices of the QIH-NMPC formulation. 

\begin{lemma} \label{lemma1} Suppose that assumptions D1 to D7 are satisfied and a stabilizing linear feedback control law is designed i.e. $\mathbf{\Phi}_{L}=(%
\mathbf{\Phi - \Gamma L})$ is stable indicating all the eigenvalues are inside the unit circle. Let $\Delta \mathbf{Q}$ is any positive definite matrix. Let matrix $\mathbf{P}$ denote the solution of the following modified Lyapunov equation: 
\begin{equation}
\mathbf{\Phi}_{L}^T \mathbf{P} \mathbf{\Phi}_{L} - \mathbf{P} = -(%
\mathbf{Q}^* + \Delta \mathbf{Q)}  \label{ACLyapD}
\end{equation}%
where $\mathbf{Q}^*$ is defined by equation (\ref{Qstar}). Then there exists a
constant $\alpha > 0$ which defines an ellipsoid of the form 
\begin{equation}
\Omega \equiv \left\{ \mathbf{x}(k) \in \mathbb{R}^{n_x} | \mathbf{x}(k)^{T} \mathbf{%
P} \mathbf{x}(k) \leq \alpha , -\mathbf{Lx}(k) \in \mathcal{U} \right\}
\label{ACTR}
\end{equation}%
such that $\Omega$ is an invariant set for the nonlinear system given by (\ref{dsystem}) with linear controller $\mathbf{u}(k) = - \mathbf{Lx}(k)$. Additionally, for any $\mathbf{z}(k+N) \in \Omega$ the inequality given by (\ref{TRCondition1}) holds true. 
\begin{equation}
\mathbf{z}(k+N)^T \mathbf{P} \mathbf{z}(k+N) \geq \sum_{i = k+N}^{\infty }\left\{ 
\begin{array}{c}
\mathbf{z}(i)^T \mathbf{W}_{x} \mathbf{z}(i)
\\ 
+ \overline{\mathbf{u}}(i) \mathbf{W}_{u} \overline{\mathbf{u}}(i)
\end{array} \right\} \label{TRCondition1}
\end{equation}
\end{lemma}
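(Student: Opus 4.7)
The plan is to follow the classical Chen--Allgöwer / Rajhans style argument, treating $\Delta \mathbf{Q}$ as the slack that absorbs the higher-order terms of the true nonlinear closed-loop dynamics. First, since $\mathbf{\Phi}_L$ is Schur stable by construction and $\mathbf{Q}^* + \Delta \mathbf{Q}$ is symmetric positive definite, standard discrete Lyapunov theory guarantees that equation (\ref{ACLyapD}) admits a unique symmetric positive definite solution $\mathbf{P}$. This makes $V(\mathbf{x}) := \mathbf{x}^T \mathbf{P}\mathbf{x}$ a candidate Lyapunov function and the sets $\{\mathbf{x} : V(\mathbf{x}) \le \alpha\}$ well-defined ellipsoids for every $\alpha > 0$.

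Next, I would apply the closed-loop feedback $\mathbf{u}(k) = -\mathbf{L}\mathbf{x}(k)$ and Taylor-expand around the origin, invoking assumption D1 to write
\begin{equation*}
\mathbf{F}(\mathbf{x}, -\mathbf{L}\mathbf{x}) = \mathbf{\Phi}_L \mathbf{x} + \boldsymbol{\phi}(\mathbf{x}),
\end{equation*}
where the remainder satisfies $|\boldsymbol{\phi}(\mathbf{x})| \le C|\mathbf{x}|^2$ on some neighborhood of the origin, with $C$ depending on the second-derivative bound of $\mathbf{F}$ and on $\mathbf{L}$. Substituting into $V(\mathbf{x}(k+1)) - V(\mathbf{x}(k))$ and using (\ref{ACLyapD}) yields
\begin{equation*}
V(\mathbf{x}(k+1)) - V(\mathbf{x}(k)) = -\mathbf{x}^T(\mathbf{Q}^* + \Delta \mathbf{Q})\mathbf{x} + 2\mathbf{x}^T \mathbf{\Phi}_L^T \mathbf{P}\boldsymbol{\phi}(\mathbf{x}) + \boldsymbol{\phi}(\mathbf{x})^T \mathbf{P}\boldsymbol{\phi}(\mathbf{x}).
\end{equation*}
The last two terms are of order $|\mathbf{x}|^3$ and $|\mathbf{x}|^4$ respectively, so by bounding $|\boldsymbol{\phi}|$ and using $\lambda_{\min}(\Delta \mathbf{Q})|\mathbf{x}|^2 \le \mathbf{x}^T \Delta \mathbf{Q}\mathbf{x}$, one can pick $\alpha_1 > 0$ small enough so that on the ellipsoid $\{V \le \alpha_1\}$ the two higher-order terms are dominated by $\mathbf{x}^T \Delta \mathbf{Q}\mathbf{x}$. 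This is precisely the step where the additive freedom in $\Delta \mathbf{Q}$ pays off and is in my view the only delicate point of the argument; everything else is bookkeeping.

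Finally, I would choose $\alpha_2 > 0$ small enough so that the input constraint $-\mathbf{L}\mathbf{x} \in \mathcal{U}$ is satisfied on $\{V \le \alpha_2\}$, which is possible because $\mathbf{0} \in \mathcal{U}$ and $\mathcal{U}$ is closed and convex by D3. Setting $\alpha = \min(\alpha_1, \alpha_2)$ and defining $\Omega$ as in (\ref{ACTR}), for any $\mathbf{z}(k+N) \in \Omega$ the derived inequality gives
\begin{equation*}
V(\mathbf{z}(i+1)) - V(\mathbf{z}(i)) \le -\mathbf{z}(i)^T \mathbf{Q}^*\mathbf{z}(i) = -\bigl[\mathbf{z}(i)^T \mathbf{W}_x \mathbf{z}(i) + \mathbf{z}(i)^T \mathbf{L}^T \mathbf{W}_u \mathbf{L}\mathbf{z}(i)\bigr]
\end{equation*}
for all $i \ge k+N$, which shows $V$ is non-increasing (hence $\Omega$ is positively invariant) and the input constraint continues to hold along the trajectory. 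Telescoping this inequality from $i = k+N$ to $\infty$ and using $V(\mathbf{z}(i)) \to 0$ (a consequence of the strict Lyapunov decrease together with positive definiteness of $\mathbf{Q}^*$) together with $\overline{\mathbf{u}}(i) = -\mathbf{L}\mathbf{z}(i)$ delivers exactly the summability bound (\ref{TRCondition1}), completing the proof.
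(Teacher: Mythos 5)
Your proposal is correct and follows essentially the same route as the paper: a unique positive definite $\mathbf{P}$ from Schur stability, the decomposition $\mathbf{F}(\mathbf{x},-\mathbf{L}\mathbf{x}) = \mathbf{\Phi}_L\mathbf{x} + \boldsymbol{\phi}(\mathbf{x})$ (the paper's $\mathbf{\Psi}_L$), the expansion of $\Delta V$ with $\Delta\mathbf{Q}$ absorbing the cross and quadratic remainder terms, a separate level set for the input constraint, and telescoping to obtain (\ref{TRCondition1}). If anything, your explicit Taylor bound $|\boldsymbol{\phi}(\mathbf{x})| \le C|\mathbf{x}|^2$ justifying the existence of a positive $\alpha_1$ is more complete than the paper, which merely posits that $\Omega$ can be chosen so that its residual term $\chi(\mathbf{x}(k)) \ge 0$ and defers that existence to the numerical procedure.
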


\begin{proof} 
Since $\mathbf{\Phi}_{L}=(%
\mathbf{\Phi - \Gamma L})$ is stable, hence, the eigenvalues of $\mathbf{\Phi}_{L}$ are inside the unit circle. Using the solvability condition of the modified Lyapunov equation, a unique $\mathbf{P} > 0$ can be computed which solves the equation (\ref{ACLyapD}). According to Assumption D2, the origin $\mathbf{0} \in \mathbb{R}^{n_u}$ is in the interior of the input constraints set $\mathcal{U}$.
Accordingly, we can compute a constant $\gamma$ which defined a set $\Omega_{\gamma}$ 
such that 
\begin{equation}
\Omega_\gamma \equiv \left\{ \mathbf{x}(k) \in \mathbb{R}^{n_x} | \mathbf{x}(k)^{T}%
\mathbf{P} \mathbf{x}(k) \leq \gamma, - \mathbf{Lx}(k) \in \mathcal{U}%
\right\} \label{Omegagamma}
\end{equation}%
Now, let $0 < \alpha \leq \gamma$ specify a region of the form given by
equation (\ref{ACTR1}). 
\begin{equation}
\Omega \equiv \left\{ \mathbf{x}(k) \in \mathbb{R}^{n_x} | \mathbf{x}(k)^{T} \mathbf{%
P} \mathbf{x} \leq \alpha \right\}
\label{ACTR1}
\end{equation}%
As the input constraints are satisfied in $\Omega
_\gamma$ and $\Omega \subseteq \Omega_\gamma$ (by virtue of $0 < \alpha \leq \gamma$), the system dynamics can be equivalently viewed as an input unconstrained system in the set $\Omega$. 
Consider a vector $\mathbf{\psi}_{L}(\mathbf{x})$ representing the nonlinearity in the system dynamics defined as 
\begin{equation}
\mathbf{\Psi}_{L}(\mathbf{x}(k))=\mathbf{F}(\mathbf{x}(k), -\mathbf{L x}(k)) - \mathbf{\Phi}_{L} \mathbf{x}(k)  \label{PsiL} 
\end{equation}%
Note for a linear system $\mathbf{\Psi}_{L}(\mathbf{x}(k)) = \mathbf{0}$. 
Consider a Lyapunov candidate defined as 
\begin{equation}
V(\mathbf{x}(k)) = \mathbf{x}(k)^{T} \mathbf{P} \mathbf{x}(k) \label{Vxk}
\end{equation}%
The difference of Lyapunov candidate $V(\mathbf{x}(k))$ can be expressed as follows: 
\begin{align}
\Delta V(\mathbf{x}(k)) &= V(\mathbf{x}(k+1)) - V(\mathbf{x}(k)) \notag \\ &= \mathbf{x}(k+1)^{T} \mathbf{P} \mathbf{x}(k+1) - \mathbf{x}(k)^{T} \mathbf{P} \mathbf{x}(k) 
\label{Vdiff1}
\end{align}%
Substituting from (\ref{PsiL}) into (\ref{Vdiff1}), 
\begin{align}
\Delta V(\mathbf{x}(k)) &= -\mathbf{x}(k)^{T} \left( \mathbf{\Phi}_{L}^{T} \mathbf{P} \mathbf{\Phi}_{L} - \mathbf{P} \right) \mathbf{x}(k) \notag \\ &+ 2 \mathbf{\Psi}_L(\mathbf{x}(k))^T \mathbf{P} \mathbf{\Phi}_{L} \mathbf{x}(k) \notag \\ &+ \mathbf{\Psi}_L(\mathbf{x}(k))^T \mathbf{P} \mathbf{\Psi}_L(\mathbf{x}(k)) 
\label{Vdiff2}
\end{align}%
Using equation (\ref{ACLyapD}) into (\ref{Vdiff2}), 
\begin{align}
\Delta V(\mathbf{x}(k)) &= -\mathbf{x}(k)^{T} \left( \mathbf{Q}^* + \Delta \mathbf{Q} \right) \mathbf{x}(k) \notag \\ &+ 2 \mathbf{\Psi}_L(\mathbf{x}(k))^T \mathbf{P} \mathbf{\Phi}_{L} \mathbf{x}(k) \notag \\ &+ \mathbf{\Psi}_L(\mathbf{x}(k))^T \mathbf{P} \mathbf{\Psi}_L(\mathbf{x}(k)) 
\label{Vdiff3}
\end{align}%
Define $\chi(\mathbf{x}(k))$ as 
\begin{align}
\chi(\mathbf{x}(k)) := & \mathbf{x}(k)^{T} \left( \Delta \mathbf{Q} \right) \mathbf{x}(k) - 2 \mathbf{\Psi}_L(\mathbf{x}(k))^T \mathbf{P} \mathbf{\Phi}_{L} \mathbf{x}(k) \notag \\ & - \mathbf{\Psi}_L(\mathbf{x}(k))^T \mathbf{P} \mathbf{\Psi}_L(\mathbf{x}(k)) 
\label{chidef1}
\end{align}%
Using equation (\ref{chidef1}) into (\ref{Vdiff3}), 
\begin{align}
\Delta V(\mathbf{x}(k)) &= -\mathbf{x}(k)^{T} \left( \mathbf{Q}^* \right) \mathbf{x}(k) - \chi(\mathbf{x}(k)) \label{Vdiff4}
\end{align}%

If $\Omega$ is chosen such that 
\begin{align}
\chi(\mathbf{x}(k)) \geq 0 \label{Vdiff13}
\end{align}%
then 
\begin{align}
\Delta V(\mathbf{x}(k)) \leq -\mathbf{x}(k)^{T} \left( \mathbf{Q}^* \right) \mathbf{x}(k)  \label{Vdiff14}
\end{align}%
%Equation (\ref{Vdiff14}) for inequality based method is identical to equation (\ref{Vdiff7}) for norm based method. 

Summing the inequality (\ref{Vdiff14}) over the interval, $[k+N, \infty
),$ it follows that 
\begin{equation}
V(\mathbf{x}(k+N)) \geq \sum_{i=k+N}^{\infty} \mathbf{x}(i)^{T}%
\mathbf{Q}^* \mathbf{x}(i) \label{Vdiff15}
\end{equation}%
Using (\ref{Qstar}) and (\ref{Vxk}), inequality (\ref{Vdiff15}) becomes identical to (\ref{TRCondition1}) and holds true for any $\mathbf{x}(k+N)) \in \Omega$.
\end{proof}

\begin{lemma} \label{lemma2} Suppose that assumptions D1 to D7 are satisfied. Let $\widetilde{\mathbf{W}}_x > \mathbf{W}_x$ and $\widetilde{\mathbf{W}}_u > \mathbf{W}_u$ be positive definite matrices. Let matrix $\mathbf{L}$ and $\mathbf{P}$ denote the steady state solution of the following modified Lyapunov equations: 
\begin{align}
\mathbf{\Phi}_{L}^T \mathbf{P} \mathbf{\Phi}_{L} - \mathbf{P} = -\left(%
\widetilde{\mathbf{W}}_x + \mathbf{L}^T \widetilde{\mathbf{W}}_u \mathbf{L} \right)  \label{LQRLyapD1} \\ 
\mathbf{L} = \left( \widetilde{\mathbf{W}}_u + \mathbf{\Gamma}^T \mathbf{P} \mathbf{\Gamma} \right)^{-1} \mathbf{\Gamma}^T \mathbf{P} \mathbf{\Phi} 
\label{LQRLyapD2}  
\end{align}%
Then there exists a constant $\alpha > 0$ which defines an ellipsoid of the form (\ref{ACTR})
such that $\Omega$ is an invariant set for the nonlinear system given by (\ref{dsystem}) with linear controller $\mathbf{u}(k) = - \mathbf{Lx}(k)$. Additionally, for any $\mathbf{z}(k+N) \in \Omega$ the inequality given by (\ref{TRCondition1}) holds true. 
\end{lemma}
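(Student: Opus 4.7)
The plan is to reduce Lemma~\ref{lemma2} to Lemma~\ref{lemma1} by observing that (\ref{LQRLyapD1})--(\ref{LQRLyapD2}) together constitute the discrete-time algebraic Riccati equation (DARE) associated with the LQR problem using weights $\widetilde{\mathbf{W}}_x$ and $\widetilde{\mathbf{W}}_u$. Standard DARE theory then supplies precisely the structural ingredients that Lemma~\ref{lemma1} requires, namely a stabilizing closed-loop matrix $\mathbf{\Phi}_L$ and a unique symmetric positive definite $\mathbf{P}$. Once these are in place, the remaining work is an algebraic rewriting of the right-hand side of (\ref{LQRLyapD1}) to match the template (\ref{ACLyapD}).

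First I would invoke solvability of the DARE. Under assumption D7 the pair $(\mathbf{\Phi},\mathbf{\Gamma})$ is stabilizable, and since $\widetilde{\mathbf{W}}_x>0$ the pair $(\mathbf{\Phi},\widetilde{\mathbf{W}}_x^{1/2})$ is trivially detectable. Classical DARE existence results then guarantee a unique symmetric positive definite stabilizing solution $\mathbf{P}>0$ such that the gain $\mathbf{L}$ computed from (\ref{LQRLyapD2}) renders $\mathbf{\Phi}_L=\mathbf{\Phi}-\mathbf{\Gamma}\mathbf{L}$ Schur stable. This step is the conceptual crux of the lemma: unlike the arbitrary-controller setting of Lemma~\ref{lemma1}, here the stabilizing property of $\mathbf{L}$ is not assumed but must be extracted from the coupled Riccati structure, either by citing the standard DARE theorem or, equivalently, via the value-function argument that the optimal LQR cost is finite only when the closed loop is stable.

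Next I would bridge to Lemma~\ref{lemma1}. Adding and subtracting $\mathbf{Q}^*=\mathbf{W}_x+\mathbf{L}^T\mathbf{W}_u\mathbf{L}$ on the right-hand side of (\ref{LQRLyapD1}) gives
\begin{equation*}
\mathbf{\Phi}_L^T\mathbf{P}\mathbf{\Phi}_L-\mathbf{P}=-(\mathbf{Q}^*+\Delta\mathbf{Q}),
\end{equation*}
where
\begin{equation*}
\Delta\mathbf{Q}:=(\widetilde{\mathbf{W}}_x-\mathbf{W}_x)+\mathbf{L}^T(\widetilde{\mathbf{W}}_u-\mathbf{W}_u)\mathbf{L}.
\end{equation*}
The strict orderings $\widetilde{\mathbf{W}}_x>\mathbf{W}_x$ and $\widetilde{\mathbf{W}}_u>\mathbf{W}_u$ imply that the first summand of $\Delta\mathbf{Q}$ is positive definite and the second is positive semidefinite, so $\Delta\mathbf{Q}>0$. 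Equation (\ref{LQRLyapD1}) has thereby been cast exactly in the form (\ref{ACLyapD}) of Lemma~\ref{lemma1}.

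Finally, with $\mathbf{\Phi}_L$ Schur stable, $\mathbf{P}>0$ solving (\ref{ACLyapD}) for this explicit choice of $\Delta\mathbf{Q}>0$, and the remaining assumptions D1--D7 inherited directly from the hypotheses of Lemma~\ref{lemma2}, Lemma~\ref{lemma1} applies verbatim: it supplies $\alpha>0$ such that the ellipsoid $\Omega$ in (\ref{ACTR}) is invariant for the nonlinear closed loop $\mathbf{u}(k)=-\mathbf{L}\mathbf{x}(k)$ and guarantees (\ref{TRCondition1}) for every $\mathbf{z}(k+N)\in\Omega$. The main obstacle in the whole argument is establishing that the Riccati coupling produces a stabilizing $\mathbf{L}$; everything afterwards is bookkeeping that channels the result back through Lemma~\ref{lemma1}.
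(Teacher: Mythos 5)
Your proposal is correct and follows essentially the same route as the paper: both define $\Delta\mathbf{Q}=(\widetilde{\mathbf{W}}_x-\mathbf{W}_x)+\mathbf{L}^T(\widetilde{\mathbf{W}}_u-\mathbf{W}_u)\mathbf{L}>0$, recast (\ref{LQRLyapD1}) in the form (\ref{ACLyapD}), and hand the rest to the argument of Lemma~\ref{lemma1}. Your explicit appeal to DARE theory to establish that the coupled equations (\ref{LQRLyapD1})--(\ref{LQRLyapD2}) yield a stabilizing gain $\mathbf{L}$ and a unique $\mathbf{P}>0$ is a point the paper's terse proof leaves implicit, and is a worthwhile addition rather than a different method.
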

\begin{proof} 
Define the following matrices, 
\begin{equation}
\Delta \mathbf{Q} = \Delta \mathbf{W}_{x} + \mathbf{L}^T \Delta \mathbf{W}_{u} \mathbf{L}  \label{DeltaQ}
\end{equation}
with 
\begin{align}
\Delta \mathbf{W}_x := \widetilde{\mathbf{W}}_x - \mathbf{W}_x \label{DeltaWx} \\ 
\Delta \mathbf{W}_u := \widetilde{\mathbf{W}}_u - \mathbf{W}_u \label{DeltaWu}  
\end{align}
Consider a Lyapunov candidate defined as 
\begin{equation}
V(\mathbf{x}(k)) = \mathbf{x}(k)^{T} \mathbf{P} \mathbf{x}(k) \label{Vxk1}
\end{equation}%
The difference of Lyapunov candidate $V(\mathbf{x}(k))$ can be expressed as follows: 
\begin{align}
\Delta V(\mathbf{x}(k)) &= V(\mathbf{x}(k+1)) - V(\mathbf{x}(k)) \notag \\ &= \mathbf{x}(k+1)^{T} \mathbf{P} \mathbf{x}(k+1) - \mathbf{x}(k)^{T} \mathbf{P} \mathbf{x}(k) 
\label{Vdiff21}
\end{align}%
Substituting from (\ref{LQRLyapD1}), (\ref{DeltaQ}), (\ref{DeltaWx}), and (\ref{DeltaWu}) into (\ref{Vdiff21}) and upon simplification, 
\begin{align}
\Delta V(\mathbf{x}(k)) \leq -\mathbf{x}(k)^{T} \left( \mathbf{Q}^* \right) \mathbf{x}(k)  \label{Vdiff22}
\end{align}%
Rest of the proof is similar to that of the proof of the lemma \ref{lemma1}. 
\end{proof}

\begin{lemma} \label{lemma3} Let the assumptions D1-D7 hold true. For the nominal discrete time system, feasibility of QIH-NMPC formulation problem (\ref{DOptimal}) at sampling instant $k=0$ implies its feasibility for all $k > 0$.  
\end{lemma}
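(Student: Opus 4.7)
The plan is to prove recursive feasibility by the classical induction argument that is standard in the QIH-NMPC literature: assume feasibility at sampling instant $k$, and construct an explicit feasible candidate input sequence at instant $k+1$ using the shifted tail of the previous optimizer concatenated with the local linear feedback. Feasibility at all future instants then follows by induction.

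First, I would fix any $k \geq 0$ and assume the formulation (\ref{DOptimal}) admits a feasible (hence optimal) input sequence $\overline{\mathbf{u}}^*_{[k,k+N-1]}$ producing the predicted trajectory $\mathbf{z}^*(k),\ldots,\mathbf{z}^*(k+N)$ that satisfies (\ref{PredictedState})–(\ref{TerminalRegion}); in particular $\mathbf{z}^*(k+N) \in \Omega$. Since only the first move is applied, the nominal closed-loop dynamics (assumptions D4–D6) give $\mathbf{x}(k+1) = \mathbf{z}^*(k+1)$. Now at instant $k+1$ I would define the candidate sequence
\begin{equation*}
\widetilde{\mathbf{u}}(i) =
\begin{cases}
\overline{\mathbf{u}}^*(i), & k+1 \leq i \leq k+N-1,\\
-\mathbf{L}\,\widetilde{\mathbf{z}}(k+N), & i = k+N,
\end{cases}
\end{equation*}
where $\widetilde{\mathbf{z}}(i)$ is the state trajectory produced by $\widetilde{\mathbf{u}}$ starting from $\widetilde{\mathbf{z}}(k+1) = \mathbf{x}(k+1)$ and propagated by (\ref{PredictedState}). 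By construction $\widetilde{\mathbf{z}}(i) = \mathbf{z}^*(i)$ for $k+1 \leq i \leq k+N$, so the input constraint (\ref{uconstraint}) and state constraint (\ref{zconstraint}) hold on these indices by the feasibility of the previous step.

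The only non-trivial part is verifying the three remaining conditions at the newly appended horizon tip: (i) $\widetilde{\mathbf{u}}(k+N) \in \mathcal{U}$, (ii) $\widetilde{\mathbf{z}}(k+N+1) \in \mathcal{X}$, and (iii) $\widetilde{\mathbf{z}}(k+N+1) \in \Omega$. All three follow from the invariance property of $\Omega$ established in Lemma~\ref{lemma1} (or equivalently Lemma~\ref{lemma2}). Because $\mathbf{z}^*(k+N) \in \Omega$, the definition (\ref{ACTR}) of $\Omega$ gives $-\mathbf{L}\,\mathbf{z}^*(k+N) \in \mathcal{U}$, settling (i). Applying the local feedback inside $\Omega$ keeps the successor state in $\Omega$ by positive invariance, which immediately settles (iii); and since $\Omega \subseteq \mathcal{X}$ (it is defined as a subset of the state space in the neighbourhood of the origin, with $\mathbf{0}\in\mathcal{X}$ by assumption), this also settles (ii).

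Having exhibited a feasible candidate at instant $k+1$, the optimization problem at $k+1$ is feasible. The hypothesis at $k=0$ then propagates to every $k > 0$ by induction. The main obstacle, as usual for this family of results, is the terminal step: one must rely entirely on the positive invariance of $\Omega$ under the local controller $\mathbf{u}=-\mathbf{L}\mathbf{x}$, together with the input-admissibility condition $-\mathbf{L}\mathbf{x} \in \mathcal{U}$ embedded in the definition of $\Omega$. Both are precisely the properties proved in Lemmas~\ref{lemma1} and \ref{lemma2}, so no new technical machinery is required here beyond a careful bookkeeping of the shift-and-append construction.
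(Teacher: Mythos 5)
Your argument is correct and is the standard shift-and-append recursive feasibility proof (reuse the tail of the previous optimizer, close the horizon with the local law $-\mathbf{L}\mathbf{x}$, and invoke the invariance and input-admissibility of $\Omega$ from Lemma~\ref{lemma1}); the paper itself gives no explicit proof but simply defers to Lemma~2 of Rajhans et al., which follows exactly this route, so you have in effect written out the argument the paper delegates to its reference. The only point worth tightening is your claim that $\Omega \subseteq \mathcal{X}$: the definition (\ref{ACTR}) places $\Omega$ in $\mathbb{R}^{n_x}$ rather than in $\mathcal{X}$, so strictly one should add that $\alpha$ is chosen small enough that the ellipsoid lies inside $\mathcal{X}$ --- a standard implicit requirement that the paper also leaves unstated.
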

\begin{proof}
Proof is identical to the proof of the lemma 2 from \cite{Rajhans2017}. 
\end{proof}

\begin{theorem} \label{theorem1} Let a) Assumptions D1-D7 hold true and b) the discrete time QIH-NMPC problem be feasible at $k = 0$. The nominal nonlinear system (\ref{dsystem}) controlled with QIH-NMPC controller is asymptotically stable at the origin. 
\end{theorem}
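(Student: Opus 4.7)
The plan is to use the optimal value function
\[
V^*(\mathbf{x}(k)) := J\!\left(\mathbf{x}(k), \overline{\mathbf{u}}^*_{[k,k+N-1]}\right)
\]
as a Lyapunov function on the region of attraction $\mathcal{X}_N$, and show (i) $V^*$ is positive definite, (ii) $V^*$ is decreasing along nominal closed-loop trajectories, and (iii) recursive feasibility holds so that $V^*(\mathbf{x}(k))$ is well defined for every $k\ge 0$. Step (iii) is immediate from Lemma \ref{lemma3} combined with the hypothesis of feasibility at $k=0$. Positive definiteness in step (i) follows from the fact that the stage cost $\mathbf{z}(i)^T \mathbf{W}_x \mathbf{z}(i) + \overline{\mathbf{u}}(i)^T \mathbf{W}_u \overline{\mathbf{u}}(i)$ is positive definite in $(\mathbf{z},\overline{\mathbf{u}})$ and the terminal penalty $\mathbf{z}(k+N)^T \mathbf{P}\mathbf{z}(k+N)$ is positive definite, since $\mathbf{W}_x, \mathbf{W}_u, \mathbf{P}\succ 0$.

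The core of the argument is the Lyapunov decrease in step (ii). At sampling instant $k$, let $\overline{\mathbf{u}}^*_{[k,k+N-1]} = \{\overline{\mathbf{u}}^*(k),\dots,\overline{\mathbf{u}}^*(k+N-1)\}$ be the optimizer, producing predicted states $\mathbf{z}^*(k),\dots,\mathbf{z}^*(k+N)\in\Omega$. At instant $k+1$, I would construct the candidate (suboptimal but feasible) input sequence
\[
\widetilde{\mathbf{u}}_{[k+1,k+N]} = \bigl\{\overline{\mathbf{u}}^*(k+1),\dots,\overline{\mathbf{u}}^*(k+N-1),\,-\mathbf{L}\mathbf{z}^*(k+N)\bigr\},
\]
i.e.\ the tail of the previous solution, appended with the terminal linear-controller move from Lemma \ref{lemma1} (or Lemma \ref{lemma2}). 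Invariance of $\Omega$ under $\mathbf{u}=-\mathbf{L}\mathbf{x}$ established in those lemmas guarantees that the appended state $\mathbf{F}(\mathbf{z}^*(k+N),-\mathbf{L}\mathbf{z}^*(k+N))$ again lies in $\Omega$, and $-\mathbf{L}\mathbf{z}^*(k+N)\in\mathcal{U}$ by the definition of $\Omega$; so the candidate is feasible.

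I would then evaluate $J$ on this candidate and subtract $V^*(\mathbf{x}(k))$. The stage-cost contributions from $i=k+1,\dots,k+N-1$ cancel telescopically, leaving a difference that contains three things: $-\mathbf{x}(k)^T\mathbf{W}_x\mathbf{x}(k) - \mathbf{u}^*(k)^T\mathbf{W}_u\mathbf{u}^*(k)$ from the removed stage at time $k$, the new stage cost at $k+N$ with the appended control, and the change of terminal penalties $\mathbf{z}^{\text{new}}(k+N+1)^T\mathbf{P}\mathbf{z}^{\text{new}}(k+N+1) - \mathbf{z}^*(k+N)^T\mathbf{P}\mathbf{z}^*(k+N)$. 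The key lemma is exactly the quasi-infinite-horizon inequality \eqref{TRCondition1} established in Lemma \ref{lemma1}/\ref{lemma2}: applied at the point $\mathbf{z}^*(k+N)\in\Omega$ and summed for one step, it gives
\[
\mathbf{z}^*(k+N)^T\mathbf{P}\mathbf{z}^*(k+N) \geq \mathbf{z}^*(k+N)^T\mathbf{W}_x\mathbf{z}^*(k+N)+\mathbf{z}^*(k+N)^T\mathbf{L}^T\mathbf{W}_u\mathbf{L}\mathbf{z}^*(k+N) + \mathbf{z}^{\text{new}}(k+N+1)^T\mathbf{P}\mathbf{z}^{\text{new}}(k+N+1),
\]
so the last three terms sum to a nonpositive quantity. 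Since the true optimum $V^*(\mathbf{x}(k+1))$ can only be smaller than the cost of the candidate, I obtain
\[
V^*(\mathbf{x}(k+1)) - V^*(\mathbf{x}(k)) \leq -\mathbf{x}(k)^T\mathbf{W}_x\mathbf{x}(k) - \mathbf{u}^*(k)^T\mathbf{W}_u\mathbf{u}^*(k) \leq -\lambda_{\min}(\mathbf{W}_x)\,|\mathbf{x}(k)|^2.
\]

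The main obstacle, as usual in NMPC stability proofs, is the bookkeeping of the cost decomposition and the careful invocation of the terminal inequality in \emph{discrete} time; once the candidate sequence is written down, the algebra is routine, but one has to make sure the one-step version of \eqref{TRCondition1} is used, not the whole infinite sum, so that exactly the extra stage cost at $k+N$ plus the new terminal penalty is bounded by the old terminal penalty. Having established the strict Lyapunov decrease on $\mathcal{X}_N$ together with positive definiteness and continuity of $V^*$, asymptotic stability of the origin for the nominal closed loop follows by the standard discrete-time Lyapunov theorem, with $\mathcal{X}_N$ as an estimate of the region of attraction.
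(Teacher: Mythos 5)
Your proposal is correct, but it follows a genuinely different route from the paper. You use the optimal value function $V^*(\mathbf{x}(k)) = J(\mathbf{x}(k),\overline{\mathbf{u}}^*_{[k,k+N-1]})$ as the Lyapunov function, build the shifted ``tail plus local linear controller'' candidate at $k+1$, and invoke the one-step form of the terminal inequality (equivalently, the decrease (\ref{Vdiff14}) established inside Lemma~\ref{lemma1}) to absorb the appended stage cost and the new terminal penalty into the old one; this is the classical Chen--Allg\"ower/Mayne-style argument and it yields the decrease $V^*(\mathbf{x}(k+1))-V^*(\mathbf{x}(k))\le -\mathbf{x}(k)^T\mathbf{W}_x\mathbf{x}(k)-\mathbf{u}^*(k)^T\mathbf{W}_u\mathbf{u}^*(k)$ on the entire region of attraction $\mathcal{X}_N$. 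The paper instead takes $V(\mathbf{x}(k))=\mathbf{x}(k)^T\mathbf{P}\mathbf{x}(k)$ itself as the Lyapunov candidate and cites (\ref{Vdiff14}) directly, which is far shorter but only establishes the decrease for $\mathbf{x}\in\Omega$ under the auxiliary linear controller $\mathbf{u}=-\mathbf{L}\mathbf{x}$, not for the actual NMPC feedback on $\mathcal{X}_N$; your argument buys a stability guarantee for the true closed loop from every feasible initial condition, at the cost of the feasibility/telescoping bookkeeping, and is the argument the recursive-feasibility Lemma~\ref{lemma3} is really there to support. Two minor caveats on your version: you should note explicitly that $V^*(\mathbf{0})=0$ (the zero input sequence is feasible at the origin since $\mathbf{0}\in\Omega$ and achieves zero cost), and the continuity of $V^*$ that you invoke for the ``standard discrete-time Lyapunov theorem'' is not automatic for constrained NMPC --- the usual fix is to conclude convergence directly from the summable decrease and the lower bound $V^*(\mathbf{x})\ge\lambda_{\min}(\mathbf{W}_x)|\mathbf{x}|^2$ together with an upper bound on $V^*$ near the origin (e.g.\ $V^*(\mathbf{x})\le\mathbf{x}^T\mathbf{P}\mathbf{x}$ on $\Omega$), rather than from continuity of $V^*$ on all of $\mathcal{X}_N$.
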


\begin{proof}
From equation (\ref{Vxk}) from the lemma (\ref{lemma1}), 
consider the Lyapunov candidate function 
\begin{equation}
V(\mathbf{x}(k)) = \mathbf{x}(k)^{T} \mathbf{P} \mathbf{x}(k) \label{Vx1}
\end{equation}%
Consider the following three properties \cite{Khalil2002}: 
\begin{itemize}
\item $V(\mathbf{0}) = (\mathbf{0}^{T}) \mathbf{P} (\mathbf{0}) = 0$. 
\item Since $\mathbf{P}$ is a positive definite matrix, $V(\mathbf{x}(k)) = \mathbf{x}(k)^{T} \mathbf{P} \mathbf{x}(k) > 0$ for all $\mathbf{x}(k) \neq \mathbf{0}$. 
\item Using (\ref{Vdiff14}) and $\mathbf{Q}^* > 0$ implies 
\begin{align}
\Delta V(\mathbf{x}(k)) \le -\mathbf{x}(k)^{T} \mathbf{Q}^* \mathbf{x}(k) < 0 \label{Vx2}
\end{align}%
\end{itemize}

Thus, the candidate function $V(\mathbf{x}(k))$ is a Lyapunov function for the
nonlinear system for $\mathbf{x} \in \Omega$ under QIH-NMPC controller. 
Hence, the closed loop system is asymptotically stable at the origin. 
%Detailed steps of the proof are similar to the proof of the theorem 1 from \cite{Chen1998}. 
\end{proof}

\section{CHARACTERIZATION OF THE TERMINAL REGION}
Lemma \ref{lemma1} or lemma \ref{lemma2} gave the conditions for explicit characterization of the terminal region. It is possible to compute the terminal region numerically and subsequently implement the QIH-NMPC controller. 
% Let $\mathbf{u} = u_1, u_2, ..., u_{n_u}$ where $u_{i} \in \mathbb{R}$ for $i = 1, 2, ..., n_u$.
%\subsection{Steps for the Characterization of the Terminal Region} 

Steps for characterization of the terminal region using arbitrary controller based approach and LQR based approach are given below: 

\begin{description}
\item[S1] Computation of Upper Bound Set: \\ 
Compute the largest value of $\gamma$ such that inputs constraints are satisfied in the set $\Omega_\gamma$ given by (\ref{Omegagamma1}). 
\begin{equation}
\Omega_\gamma \equiv \left\{ \mathbf{x}(k) \in \mathbb{R}^{n_x} | \mathbf{x}(k)^{T}%
\mathbf{P} \mathbf{x}(k) \leq \gamma, - \mathbf{K x}(k) \in \mathcal{U}%
\right\} \label{Omegagamma1}
\end{equation}%
This can be formulated as a Quadratic Programming (QP) problem if the constraints are defined by upper bound and lower bound on each of the input variables. Typically the set $\Omega_\gamma$ would be tangential to at least one of the input constraints. 
\item[S2] Computation of the Terminal Region using inequality based method: \\ 
Compute the largest $\alpha \in (0, \gamma]$ such that 
\begin{align}
\left[ 
\begin{array}{c} 
\min \\ 
\mathbf{x}(k) \in \Omega%
\end{array}%
\chi(\mathbf{x}(k)) \right] = 0 \label{TRCompute3}
\end{align}
The condition given by (\ref{TRCompute3}) ensures that $\mathbf{\Psi}(\mathbf{x}) > 0$ for all $\mathbf{x} \in \Omega$, which is the necessary condition to further establish the nominal asymptotic stability. 
\end{description} 
The step S2 is implemented as follows: \\
Initially $\alpha = \gamma$ and condition (\ref{Vdiff13}) i.e. $(\chi(\mathbf{x}(k)) \geq 0)$ is checked. If (\ref{Vdiff13}) is true, then $\alpha = \gamma$. If (\ref{Vdiff13}) is false i.e. $\chi(\mathbf{x})(k) < 0)$ for at least one $\mathbf{x}(k) \in \Omega$, then the value of $\alpha$ is further reduced by a multiplicative factor $\beta < 1$ and $\beta \approx 1$. The process continues until condition (\ref{Vdiff13}) is satisfied. 
 
%\subsection{Quantification of the Terminal Region}
Terminal region shape changes according to the computed $\mathbf{P}$ matrix and its size changes according to the value of $\alpha$. Comparison of the terminal regions obtained using various approaches is carried out using measurement of the area for the system with a state dimension of 2. Area of the terminal region $\Omega$ defined by (\ref{ACTR}) is given by 
\begin{align}
A_2 = \frac{\pi \alpha}{\sqrt{det(\mathbf{P})}} \label{Area}
\end{align}

\section{Simulation Case Study} 
Effectiveness of the proposed approaches for the terminal region characterization and its applicability to QIH-NMPC discrete time simulations is demonstrated using the benchmark two state system which is used by several researchers \cite{Mayne1990, Chen1998, Yu2017}. 

\subsection{Choice of Tuning Matrices}
According to the design of the arbitrary controller based approach, the gain matrix $\mathbf{K}$ can be any arbitrary stabilizing linear controller. However, in order to simplify the computations, simulation results are presented with the following choice. Identical linear controller gain is also used in the case of LQR based approach. Controller gain $\mathbf{K}$ is the steady state solution of the simultaneous equations (\ref{LQRP}) and (\ref{LQRK}).  
\begin{align}
\mathbf{\Phi}_{L}^T \mathbf{P} \mathbf{\Phi}_{L} - \mathbf{P} = -\left(%
\mathbf{W}_x + \mathbf{L}^T \mathbf{W}_u \mathbf{L} \right)  \label{LQRP} \\ 
\mathbf{L} = \left( \mathbf{W}_u + \mathbf{\Gamma}^T \mathbf{P} \mathbf{\Gamma} \right)^{-1} \mathbf{\Gamma}^T \mathbf{P} \mathbf{\Phi} 
\label{LQRK}  
\end{align}%

The tuning matrix $\Delta \mathbf{Q}$ for the arbitrary controller based approach can be any positive definite matrix. However, in order to simplify and structure the computations of the terminal region, following parameterization is carried out:   
\begin{equation}
\Delta \mathbf{Q} = \widetilde{\mathbf{W}}_{x} + \mathbf{K}^T \widetilde{\mathbf{W}}_{u} \mathbf{K}  \label{ACDQ}
\end{equation}%
In order to further simplify the numerical computation of the terminal region using arbitrary controller based approach, additional parameterization is carried out and is given as follows: 
\begin{equation}
\widetilde{\mathbf{W}}_{x} = \rho_x \mathbf{W}_{x} \text{ and } \widetilde{\mathbf{W}}_{u} = \rho_u \mathbf{W}_{u} \label{TuningM}
\end{equation}
Note that it is sufficient to have $\widetilde{\mathbf{W}}_{x} > \mathbf{W}_{x}$ or $\widetilde{\mathbf{W}}_{u} > \mathbf{W}_{u}$ to satisfy $\Delta \mathbf{Q} > 0$, however, usually both $\widetilde{\mathbf{W}}_{x} > \mathbf{W}_{x}$ and $\widetilde{\mathbf{W}}_{u} > \mathbf{W}_{u}$ is preferred in practice.  
Using the matrices (\ref{TuningM}) into (\ref{ACDQ}), 
\begin{equation}
\Delta \mathbf{Q} = \rho_x \mathbf{W}_{x} + \rho_u \mathbf{K}^T \mathbf{W}_{u} \mathbf{K}  \label{ACDQ1}
\end{equation}
where $\rho_x$ and $\rho_u$ are the tuning scalars. 
For the arbitrary controller based approach, conditions $\rho_x > 0$, and $\rho_u > 0$ and for the LQR based approach, conditions $\rho_x > 1$ and $\rho_u > 1$ are imposed. 

Efficacy of having two tuning parameters is efficiently demonstrated using the case study in the next sub-section. In the case study, in the first iteration, parameter $\rho_x$ is increased from $0$ (for arbitrary controller based approach) or $1$ (for LQR based approach) to a large value, keeping $\rho_u$ constant at $0$ or $1$ respectively. A value of $\rho_x$ is chosen which leads to larger terminal region in the first iteration. Subsequently, in the second iteration, parameter $\rho_u$ is increased from $0$ (for arbitrary controller based approach) or $1$ (for LQR based approach) to larger value, keeping $\rho_x$ as constant to the values obtained in the first iteration.

\subsection{Two State System} 

Discrete time system dynamics are given by (\ref{CAEeq1})-(\ref{CAEeq2}). These equations are obtained by using the continuous time equations given in \cite{Mayne1990, Chen1998} with a simple first order Euler integration to obtain the discrete time equations identical to the ones given in \cite{Yu2017}. 
\begin{align}
x_1 (k+1) & = x_1(k) + T \left( x_2 (k) + u(k) \left( \mu_0 + (1-\mu_0) x_1(k) \right) \right) \label{CAEeq1} \\
x_2 (k+1) & = x_2(k) + T \left( x_1 (k) + u(k) \left( \mu_0 - 4 (1-\mu_0) x_2(k) \right) \right) \label{CAEeq2}
\end{align}%
where parameter $\mu_0 = 0.5$, $\mathbf{x}(k) = \left[x_1(k) ~~ x_2(k) \right]^T$ is the state vector, and $u(k)$ is the input. The chosen equilibrium operating point is given as $\mathbf{x}(k) = \left[0 ~~ 0 \right]^T$ and $\mathbf{u}(k) = \left[ 0 \right]$. 
%follows:%
%\begin{equation}
%\mathbf{X}_{s}=\left[ 
%\begin{array}{cc}
%0 & 0 
%\end{array}%
%\right]^T \text{ and }\mathbf{U}_{s}=\left[ 
%\begin{array}{c}
%0
%\end{array}%
%\right]
%\end{equation}%
It may be noted that sampling interval chosen for computations is $T = 0.1$s, which is identical to the one chosen in \cite{Yu2017}. The eigenvalues of the linearized matrix at the origin are $(0.9048, 1.1052)$, indicating that the chosen operating point is unstable resulting in a challenging control problem. 

The discrete time QIH-NMPC weighting matrices are chosen identical to the one given in \cite{Yu2017} and are given as follows: 
\begin{align}
\mathbf{W}_{x} = \left[ \begin{array}{ccc}
 1 & 0 \\ 
0 & 1 
\end{array} \right] \text{ and } 
\mathbf{W}_{u} = \left[ \begin{array}{cc}
0.5 
\end{array} \right] \label{CAECWM}
\end{align}%
The input constraint used in the QIH-NMPC formulation is identical to the one given in \cite{Chen1998, Yu2017} and is given as follows: 
\begin{equation}
\mathcal{U}=\left\{ \mathbf{u}(k)\in \mathbb{R}~|~-2 \leq \mathbf{u}(k)\leq 2 \right\} 
\end{equation}
%
%Terminal region obtained using Chen and Allg\"ower's inequality approach is\ $\Omega
%=\left\{ \mathbf{x}^{T}\mathbf{P} \mathbf{x} \leq 2.1255 \times 10^{-4} \right\} $ with linear gain matrix and terminal penalty matrix computed as follows: 
%\begin{equation}
%\mathbf{K}_0 = \left[ 
%\begin{array}{ccc}
%   -0.2994  &  0.7249 &  -1.1771 \\ 
%    0.0774   & 0.5753  &  0.3908
%\end{array}%
%\right] 
%\end{equation}
%\begin{equation}
%\mathbf{P}_0 = \left[ 
%\begin{array}{ccc}
%   59.8750 &  -3.2848 & -16.0747 \\
%   -3.2848  &  5.5392 &   4.1659 \\
%  -16.0747  &  4.1659 &   6.3596 \\
%\end{array}%
%\right] 
%\end{equation}

Table {\ref{CAE_TR_Disc_AC_rhox}} presents the terminal regions obtained using the arbitrary controller based approach when only one tuning parameter $\rho_x$ is varied for the two state system. 

\begin{table}[tbph]
\caption{Terminal Region for arbitrary controller based approach for Two State System for Varying $\rho_x$ only}
\label{CAE_TR_Disc_AC_rhox}\centering%
\begin{tabular}{|c|c|c|c|}
\hline 
$\rho_x$ & $\gamma$ & $\alpha$ & Area \\ \hline \hline
    0.1 &   9.8803 &   0.1537 &   0.0391 \\ \hline  
    1 &  11.3196  &  0.1700  &  0.0300 \\ \hline  
    5 &  17.7167 &   0.7247  &  0.0590 \\ \hline  
   20 &  41.7057 &   4.1387  &  0.1174 \\ \hline  
   50 &  89.6837 &  10.6782  &  0.1325 \\ \hline  
  100 & 169.6470 &  22.2603  &  0.1428 \\ \hline  
\end{tabular}%
\end{table}

From the table {\ref{CAE_TR_Disc_AC_rhox}}, it may be observed that the terminal region area has nearly saturated at $\rho_x = 100$ and is used as the value for the second iteration. Table {\ref{CAE_TR_Disc_AC_rhou}} presents the terminal regions obtained using the arbitrary controller based approach when the second tuning parameter $\rho_u$ is varied, keeping first tuning parameter $\rho_x$ as constant. It may be noted that in this particular system, value of $\alpha$ which defines the terminal region boundary are found identical to its upper bound $\gamma$ and it probably because of the mild nonlinearity of the system dynamics near the origin. 

\begin{table}[tbph]
\caption{Terminal Region for arbitrary controller based approach for Two State System for Varying $\rho_u$ only}
\label{CAE_TR_Disc_AC_rhou}\centering%
\begin{tabular}{|c|c|c|c|}
\hline 
$\rho_u$ & $\gamma$ & $\alpha$ & Volume \\ \hline \hline
   1 &   177.8  &  22.6 &   0.1416 \\ \hline  
   5 &   210.3  &  28.2 &   0.1626 \\ \hline  
   10 &   250.9  &  33.2 &   0.1750 \\ \hline  
  100 &   981.8 &   106.5 &   0.2838 \\ \hline  
  200 &   1793.9 &   131.2 &   0.2588 \\ \hline  
\end{tabular}%
\end{table} 

It can be noticed from the table \ref{CAE_TR_Disc_AC_rhou} that the terminal region area has reached at $\rho_u = 100$.

Similar to the arbitrary controller based approach, LQR based approach is implemented. Table {\ref{CAE_TR_Disc_LQR_rhox}} presents the terminal regions obtained using the LQR based approach when only one tuning parameter $\rho_x$ is varied and using the inequality method for the two state system. 

\begin{table}[tbph]
\caption{Terminal Region for LQR based approach for Two State System for Varying $\rho_x$ only}
\label{CAE_TR_Disc_LQR_rhox}\centering%
\begin{tabular}{|c|c|c|c|}
\hline 
$\rho_x$ & $\gamma$ & $\alpha$ & Area \\ \hline \hline
    5 &   7.9304 &   0.2574  &  0.0253 \\ \hline 
   20 &   6.6979 &   1.2780 &   0.0467 \\ \hline 
   50 &   6.6179 &   3.1971 &  0.0575 \\ \hline 
  100 &   7.2567 &   4.7021 &   0.0479 \\ \hline 
\end{tabular}%
\end{table}

From the table {\ref{CAE_TR_Disc_LQR_rhox}}, terminal region with largest area is obtained at $\rho_x = 50$, which is the value used for the second iteration. Table {\ref{CAE_TR_Disc_LQR_rhou}} presents the terminal regions obtained using the LQR based approach when the second tuning parameter $\rho_u$ is varied, keeping first tuning parameter $\rho_x$ as constant. 

\begin{table}[tbph]
\caption{Terminal Region for LQR based approach for Two State System for Varying $\rho_u$ only}
\label{CAE_TR_Disc_LQR_rhou}\centering%
\begin{tabular}{|c|c|c|c|}
\hline 
$\rho_u$ & $\gamma$ & $\alpha$ & Volume \\ \hline \hline
   1.1 &  7.832    &  5.159    & 0.0517 \\ \hline 
   5    &  33.490  & 17.369   & 0.1269 \\ \hline 
   10  &  72.006  &  27.836  & 0.1689 \\ \hline 
   20  &  158.61  &  42.685  & 0.2093 \\ \hline 
   50  &  450.21  &  71.597  & 0.2542 \\ \hline 
   100 & 972.03  &   95.677 & 0.2576 \\ \hline 
   200 & 2051.1  & 125.50   & 0.2501  \\ \hline 
\end{tabular}%
\end{table} 

It can be noticed from the table \ref{CAE_TR_Disc_LQR_rhou} that the terminal region area has reached maximum saturation around $\rho_u = 100$.

Table {\ref{CAE_TR_Disc_Compare}} presents a comparison of the terminal regions obtained using approach given in \cite{Yu2017}, Arbitrary Controller (AC) based approach, and LQR based approach for the two state system system. Table presents the tuning parameters, values of $\gamma$ (upper bound on $\alpha$) and $\alpha$ (which defines the terminal region boundary) and the area of each of the terminal regions. 

\begin{table}[tbph]
\caption{Comparison of Terminal Regions for Two State System using Various Approaches}
\label{CAE_TR_Disc_Compare}\centering%
\begin{tabular}{|c|c|c|c|c|c|}
\hline 
Approach & Parameter & $\gamma$ & $\alpha$  &  Area \\ \hline \hline
Yu et al. \cite{Yu2017} & $ \kappa = 1/0.91 $ &  123.66 & 0.778 & 0.0271 \\ \hline 
AC & $ \rho_x = 100 $, $ \rho_u = 0 $ & 169.65 & 22.26 & 0.1428 \\ \hline 
AC & $ \rho_x = 100 $, $ \rho_u = 100 $ & 981.8 & 106.5 & 0.2838 \\ \hline 
LQR & $ \rho_x = 50 $, $ \rho_u = 1 $ & 6.6179 & 3.1971 & 0.0575 \\ \hline 
LQR & $ \rho_x = 50 $, $ \rho_u = 100 $ & 972.03 & 95.677 & 0.2576 \\ \hline 
%$1 \times 10^{6}$  & $ 105.79 $ & 105.79 & $ 1.1029 \times 10^{-3}$ \\ \hline 
\end{tabular}%
\end{table} 

From the table {\ref{CAE_TR_Disc_Compare}}, it can be observed that the largest terminal regions obtained using the arbitrary controller based approach and LQR based approach are approximately 10.4723 and 9.5055 times larger by area respectively when compared to the largest terminal region obtained using the approach given by Yu et al. in \cite{Yu2017}. It also illustrates the significant impact of the second degree of freedom in the form of tuning parameter $\rho_u$ (or tuning parameter matrix $\widetilde{\mathbf{W}}_u$) on the size of the terminal region.

The terminal region obtained using the arbitrary controller based approach (with $\rho _{x} = 100$ and $\rho _{u} = 100$) having largest area is\ $\Omega
=\left\{ \mathbf{x}^{T}\mathbf{P} \mathbf{x} \leq 106.5 \right\} $ with an area of $0.2838$. Corresponding values of linear gain matrix $\mathbf{K}$ and terminal
penalty matrix $\mathbf{P}$ are given as follows:%
\begin{equation}
\mathbf{K} = \left[ 
\begin{array}{cc}
    2.2534  &  2.2534 
\end{array}%
\right] 
\end{equation}
\begin{equation}
\mathbf{P} = 10^3 \times \left[ 
\begin{array}{cc}
    1.5249  &  0.9677 \\ 
    0.9677  &  1.5249  
\end{array}%
\right] 
\end{equation}

The terminal region obtained using the LQR based approach (with $\rho _{x} = 50$ and $\rho _{u} = 100$) having largest area is\ $\Omega
=\left\{ \mathbf{x}^{T}\mathbf{P} \mathbf{x} \leq 95.677 \right\} $ with an area of $0.2576$. Corresponding values of linear gain matrix $\mathbf{K}$ and terminal
penalty matrix $\mathbf{P}$ are given as follows:%
\begin{equation}
\mathbf{K} = \left[ 
\begin{array}{cc}
    2.2534 &   2.2534 
\end{array}%
\right] 
\end{equation}
\begin{equation}
\mathbf{P} = 10^3 \times \left[ 
\begin{array}{cc}
    1.5098  &  0.9582 \\ 
    0.9582  &  1.5098  
\end{array}%
\right] 
\end{equation}
Note that the gain matrix is identical in both the approaches by choice. Discrete time QIH-NMPC controller simulations are carried out using MATLAB software. Sampling interval used for computer simulations is $0.1$ seconds. Initial condition for the state is identical to the one given in \cite{Yu2017} and is given as $\mathbf{x}_0 = [-3~~2]^T$.

Table \ref{NMPC_Disc_Pred_CAE} gives minimum prediction horizon length required to satisfy the terminal constraint. It can be observed that the minimum prediction horizon required in the case of arbitrary controller based approach and LQR based approach is significantly smaller as compared to the prediction horizon required in the case of terminal constraint obtained using Yu et al.'s approach. 
%In addition, the fact that larger the terminal region, smaller is the minimum prediction horizon required for satisfaction of the inequality terminal constraint for identical initial conditions. 
It is well established in the literature that the computation time required for NMPC iterations increases exponentially with the prediction horizon length \cite{Mayne2000, Chen1998, Rawlings2017}. Results obtained effectively validate the advantage of having larger terminal regions.

\begin{table}[tbph]
\caption{Minimum Prediction Horizon Length Required for Various Initial Conditions in the Two State System}
\label{NMPC_Disc_Pred_CAE}\centering%
\begin{tabular}{|c|c|}
\hline 
Approach & $N$ \\ \hline \hline
Yu et al.'s \cite{Yu2017} approach & 28 \\ \hline 
Arbitrary controller based approach & 11 \\ \hline 
LQR based approach & 12 \\ \hline 
\end{tabular}%
\end{table}

Figure \ref{CAE_Disc_Compare_TR} depicts the plot of the boundaries of the terminal regions obtained using three approaches namely a) approach given by Yu et al. in \cite{Yu2017}, b) LQR based approach, and c) Arbitrary controller based approach for the two state system for an identical sampling interval of $0.1$s. It may be noted that the terminal regions obtained using arbitrary controller based approach and LQR based approach are significantly larger than the approach by Yu et al. In addition there are several states which lie inside or near to the larger terminal regions obtained using the proposed approaches resulting in feasible initial conditions with a very small prediction horizon length, which will require a relatively larger prediction horizon lengths when the terminal region given in \cite{Yu2017} is used as terminal inequality constraint.  

\begin{figure}[!ht]
\centerline{\includegraphics[width=\columnwidth]{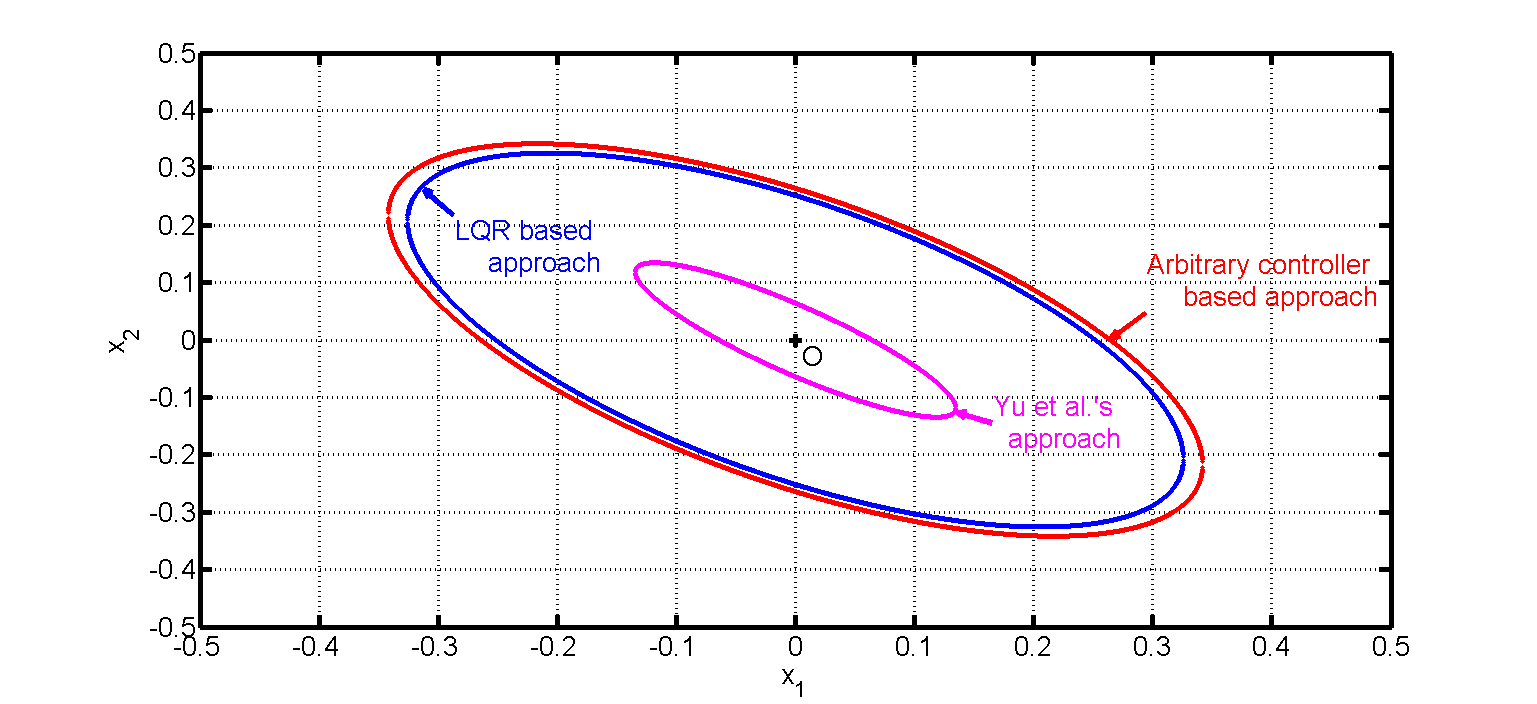}}
\caption{Two states system: Graphical Comparison of the Terminal Regions}
\label{CAE_Disc_Compare_TR}
\end{figure}

\subsection{Conclusions} 
Approaches presented in the literature for the terminal region characterization for the discrete time QIH-NMPC formulations provide limited degrees of freedom and often result in a conservative terminal region. Two approaches are presented in this work namely an arbitrary controller based approach and an LQR based approach. Proposed approaches provide two additive matrices as tuning parameters for enlargement of the terminal region. Both the approaches are scalable to system of any state and input dimension. 

Efficacy of the approaches is demonstrated using benchmark two state system. It is observed that the terminal region obtained using the arbitrary controller based approach and LQR based approach are approximately 10.4723 and 9.5055 times larger by area measure when compared to the largest terminal region obtained using the approach given by Yu et al. \cite{Yu2017}. Future research would involve choosing a completely arbitrary linear controller gain and parameterizing the approaches using different combinations of additive tuning matrices.

\bibliography{Journal3_bib1}

\end{document}